\theoremstyle{plain}
\newtheorem{thm}{Theorem}[section]
\newtheorem{lem}[thm]{Lemma}
\newtheorem{crl}[thm]{Corollary}
\theoremstyle{definition}
\newtheorem{defn}{Definition}[section]
\theoremstyle{remark}
\newtheorem{rem}{Remark}[section]
\numberwithin{equation}{section}
\newcommand{\Xa}{X_{\alpha}}
\newcommand{\Xb}{X_{\beta}}
\newcommand{\xa}{x_{\alpha}}
\newcommand{\xb}{x_{\beta}}
\newcommand{\mxa}{M_{X_{\alpha}}}
\newcommand{\Pa}{P_{\alpha}}
\newcommand{\Pdota}{\dot{P}_{\alpha}}
\newcommand{\Pddota}{\ddot{P}_{\alpha}}
\newcommand{\Pb}{P_{\beta}}
\newcommand{\pa}{p_{\alpha}}
\newcommand{\pb}{p_{\beta}}
\newcommand{\Da}{D_{\alpha}}
\newcommand{\fa}{f_{\alpha}}
\newcommand{\ut}{{\tilde u}}
\newcommand{\vt}{{\tilde v}}
\newcommand{\cthree}{C_0^3(\mathbb{R}^n)}
\newcommand{\ctwo}{C_0^2(\mathbb{R}^n)}
\newcommand{\eltwo}{L^2(\mathbb{R}^n)}
\newcommand{\hm}{\hspace{-2mm} &}
\newcommand{\dfr}[2]{\displaystyle \frac{#1}{#2}}
\newcommand{\ba}{\begin{array}}
\newcommand{\ea}{\end{array}}
\begin{document}

\title{\textbf{CHARACTERIZATION \\OF POINT TRANSFORMATIONS \\IN QUANTUM MECHANICS}}


\vspace{3cm}

\author{Yoshio Ohnuki and Shuji Watanabe}

\date{}

\maketitle

\begin{abstract}
We characterize point transformations in quantum mechanics from the mathematical viewpoint. To conclude that the canonical variables given by each point transformation in quantum mechanics correctly describe the extended point transformation, we show that they are all selfadjoint operators in $L^2(\mathbb{R}^n)$ and that the continuous spectrum of each coincides with $\mathbb{R}$. They are also shown to satisfy the canonical commutation relations.
\end{abstract}

\noindent \textbf{2000 Mathematics Subject Classification.} Primary 47B25, 81Q10.

\noindent \textbf{KEY WORDS AND PHRASES.} Point transformation, quantum mechanics, selfadjointness, spectrum, canonical commutation relations.


\section{Introduction}

In classical mechanics the coordinate transformation
\begin{equation}\label{eq:mapping}
\left\{\ba{ll}
 f : \, x=(x_1,\,x_2, \,\ldots,\,x_n) \mapsto X=(X_1,\,X_2, \,\ldots,\,X_n),\\[1mm]
 \Xa = f_{\alpha}(x) \qquad (\alpha = 1, 2, \ldots, n)
\ea\right.
\end{equation}
is called a point transformation, where
\begin{equation}\label{eq:domain}
x \in D^n
\end{equation}
and the existence of $f^{-1}$ is assumed. In classical mechanics the domain $D^n$ does not always coincide with $\mathbb{R}^n$; it is sufficient for $D^n$ to involve the trajectory of a physical system under consideration.

It is known that the point transformation can be extended to a canonical transformation (see e.g. Whittaker \cite[p.293]{whittaker})
\[
(x_1,\ldots,x_n,p_1,\ldots,p_n) \mapsto (X_1,\ldots,X_n,P_1,\ldots,P_n),
\]
which is called an extended point transformation and is given by
\begin{equation}\label{eq:pt}\left\{\ba{ll}
 \Xa = \fa (x),\\[2mm]
 \Pa = {\displaystyle \sum_{\beta=1}^n}\, \dfr{\partial \xb}{\partial \Xa}\pb.
\ea\right.
\end{equation}
Here the canonical momenta $\pa$ and $\Pa$ are conjugate to $\xa$ and $\Xa$, respectively. Let $[A,\,B]_{\rm cl}$ stand for the classical Poisson bracket for $A(x,\,p)$ and $B(x,\,p)$:
\[
[A,\,B]_{\rm cl} = \sum_{\alpha = 1}^n\left(
\frac{\,\partial A\,}{\,\partial\xa\,}\frac{\,\partial B\,}{\,\partial\pa\,}
- \frac{\,\partial B\,}{\,\partial\xa\,}\frac{\,\partial A\,}{\,\partial\pa\,}
\right).
\]
The canonical variables $\xa$ and $\pa$ obey the relations
\[
[\xa ,\, \pb ]_{\rm cl}=\delta_{\alpha\beta}, \quad [\xa,\,\xb]_{\rm cl}=[\pa ,\,\pb ]_{\rm cl}=0.
\]
Then it is known that the new canonical variables $\Xa$ and $\Pa$ also obey
\begin{equation}\label{eq:newpb}
[X_\alpha,\,P_\beta]_{\rm cl}=\delta_{\alpha\beta}, \quad [X_\alpha,\,X_\beta]_{\rm cl}=[P_\alpha,\,P_\beta]_{\rm cl}=0.
\end{equation}

As an example, let us consider the point transformation $f: (x_1,\,x_2) \mapsto (r,\,\theta)$ from cartesian to plane polar coordinates. Here $(x_1,\,x_2) \in \mathbb{R}^2$. The existence of $f$ together with $f^{-1}$ implies $(x_1,\,x_2) \in D^2 = \mathbb{R}^2 \setminus \{ (0,\, 0) \}$. We are thus led to the extended point transformation $(x_1,\,x_2,\,p_1,\,p_2) \mapsto (r,\,\theta,\,p_r,\,p_{\theta})$. Here the canonical momenta $p_r$ and $p_{\theta}$ are conjugate to $r$ and $\theta$, respectively.

In quantum mechanics, however, the situation is quite different. It is known that the continuous spectrum of each canonical variable in quantum mechanics coincides with $\mathbb{R}$. Therefore, the point transformation $f: (x_1,\,x_2) \mapsto (r,\,\theta)$ from cartesian to plane polar coordinates is no longer allowed within the frame work of quantum mechanics. Hence the extended point transformation $(x_1,\,x_2,\,p_1,\,p_2) \mapsto (r,\,\theta,\,p_r,\,p_{\theta})$ is not allowed any longer. In fact, if it were allowed, then $r$, $\theta$, $p_r$ and $p_{\theta}$ would satisfy the canonical commutation relations. But this is not the case, because this clearly contradicts positivity of $r$ and boundedness of $\theta$.

The purpose of this paper is to characterize point transformations in quantum mechanics from the mathematical viewpoint. To this end we begin with defining a point transformation in classical mechanics and also that in quantum mechanics. Then, following DeWitt \cite{dewitt}, we define the new canonical momentum $\Pa$ conjugate to $\Xa$ in quantum mechanics. They \textit{should} be selfadjoint operators in a Hilbert space and the continuous spectrum of each \textit{should} coincide with $\mathbb{R}$. Moreover, they \textit{should} satisfy the canonical commutation relations. To conclude that the new canonical variables $\Xa$ and $\Pa$ correctly describe the extended point transformation in quantum mechanics, we show that they are all selfadjoint operators in $L^2(\mathbb{R}^n)$ and that the continuous spectrum of each of $\Xa$ and $\Pa$ coincides with $\mathbb{R}$. Moreover, we show that they satisfy the canonical commutation relations.

For simplicity, we use the unit $\hbar=1$ throughout this paper.


\section{Main results}

In classical mechanics the new canonical variables $\Xa$ and $\Pa$ given by \eqref{eq:pt} are required to obey \eqref{eq:newpb}, and hence the map $f$ is a $C^2$-diffeomorphism.

\begin{defn}\label{defn:ptcm}
We say that the map $f$ is a point transformation in classical mechanics if $f$ is a $C^2$-diffeomorphism and satisfies \eqref{eq:mapping}, \eqref{eq:domain}.
\end{defn}

\begin{rem}
Let $(x_1,\,x_2) \in \mathbb{R}^2$. The coordinate transformation $f: (x_1,\,x_2) \mapsto (r,\,\theta)$ from cartesian to plane polar coordinates is a point transformation in classical mechanics. The domain $D^2$ of the map $f$ does not contain the origin, i.e., $D^2 = \mathbb{R}^2 \setminus \{ (0, \, 0) \}$. Hence, $r$, $\theta$, $p_r$ and $p_{\theta}$ are canonical variables in classical mechanics:
\[
[r,\,p_r ]_{\rm cl}=[\theta,\,p_{\theta}]_{\rm cl}=1,\quad [r,\,\theta]_{\rm cl}=[r,\, p_{\theta}]_{\rm cl}=[\theta,\, p_r]_{\rm cl}=[p_r,\, p_{\theta}]_{\rm cl}=0.
\]
\end{rem}

In quantum mechanics the operators $\xa$ and $\pa$ are assumed to obey the canonical commutation relations
\[
[\xa ,\,\pb ]=i\,\delta_{\alpha\beta}, \quad [\xa ,\,\xb ]=[\pa ,\,\pb]=0,
\]
where $[A,\, B]=AB - BA$. Let $\xa$ be the multiplication by $\xa$. Then
\begin{equation}\label{eq:pa}
\pa = -i\,\frac{\partial}{\partial \xa}.
\end{equation}

\begin{defn}\label{defn:ptqm}
Let $f: \mathbb{R}^n \to \mathbb{R}^n$ be a bijective map satisfying
\begin{equation*}\left\{\ba{ll}
 f : \, x=(x_1,\,x_2, \,\ldots,\,x_n) \mapsto X=(X_1,\,X_2, \,\ldots,\,X_n),\\[1mm]
 \Xa = \fa (x) \qquad (\alpha = 1, 2, \ldots, n).
\ea\right.
\end{equation*}
We say that the map $f$ is a point transformation in quantum mechanics if $f$ is a $C^3$-diffeomorphism.
\end{defn}

\begin{rem}
Since the operator $\xa$ is the multiplication by $\xa$, the operator $\Xa$ is also the multiplication by $\fa (x)$.
\end{rem}

\begin{rem}
Let $(x_1,\,x_2) \in \mathbb{R}^2$. The coordinate transformation $f: (x_1,\,x_2) \mapsto (r,\,\theta)$ from cartesian to plane polar coordinates is \textit{not} a point transformation in quantum mechanics. This is because the domain of the map $f$ does not coincide with $\mathbb{R}^2$. Therefore, $r$, $\theta$, $p_r$ and $p_{\theta}$ are not canonical variables in quantum mechanics, and hence one can not impose the following relations
\[
[r,\,p_r ]=[\theta,\,p_{\theta}]=i,\quad [r,\,\theta]=[r,\, p_{\theta}]=[\theta,\, p_r]=[p_r,\, p_{\theta}]=0.
\]
\end{rem}

Let $\Pa$ $(\alpha = 1, 2, \ldots, n)$ denote the canonical momenta conjugate to $\Xa$. Following DeWitt, we define $\Pa$ in quantum mechanics.
\begin{defn}[DeWitt \cite{dewitt}]\label{defn:cvpt}
The new canonical variables $\Xa$ and $\Pa$ in quantum mechanics are defined by
\[
\left\{\ba{ll}
\Xa = \fa (x),\\[2mm]
\Pa = {\displaystyle \frac{1}{ \, 2 \, }\sum_{\beta=1}^n \left(
\frac{\,\partial \xb \,}{\,\partial \Xa\,}\,\pb +
\pb \, \frac{\,\partial \xb \,}{\,\partial \Xa\,}\right) }. 
\ea\right.
\]
\end{defn}

\begin{rem}
DeWitt \cite{dewitt} also showed that for $\Xa$ and $\Pa$, the operator $\pb$ is uniquely given by
\[
\pb = \frac{1}{ \, 2 \, }\sum_{\alpha = 1}^n
\left( \frac{ \, \partial \Xa \, }{ \, \partial \xb \, } \, \Pa +
\Pa \frac{ \, \partial \Xa \, }{ \, \partial \xb \, } \right).
\]
Interchanging, in this equality, the small and the capital letters we obtain the equality for $\Pa$ in Definition \ref{defn:cvpt}.
\end{rem}

\begin{rem}\label{rem:concrete}
Combining Definition \ref{defn:cvpt} with \eqref{eq:pa} yields
\[
\Pa = - i\,\sum_{\beta = 1}^n \frac{\partial \xb}{\partial \Xa}\frac{\partial}{\partial \xb} - \frac{i}{2}\sum_{\beta = 1}^n \frac{\partial}{\partial \xb}\left(\frac{\partial \xb}{\partial \Xa}\right).
\]
The operators $\Xa$ and $\Pa$ act on functions of $\xa$'s.
\end{rem}

We denote by $\sigma(A)$ (resp. by $\sigma_c(A)$) the spectrum (resp. the continuous spectrum) of an operator $A$. Let $\Pdota$ denote the operator $\Pa$ restricted $\cthree$ and let $\overline{\Pdota}$ denote the closure of $\Pdota$. Since the point transformation $f$ in quantum mechanics is a $C^3$-diffeomorphism (Definition \ref{defn:ptqm}), $\Pdota$ is a symmetric operator in $\eltwo$.
\begin{thm}\label{thm:sasp}
{\rm (a)}\quad The operator $\Xa$ is selfadjoint on $\displaystyle{D(\Xa) = \left\{ u(x) : u, \, \fa (x) u \in \eltwo \right\}}$.

{\rm (b)}\quad The set $\mathbb{R}$ coincides with the continuous spectrum of the selfadjoint operator $\Xa$ given by {\rm (a)}, i.e., $\sigma(\Xa) = {\sigma}_c(\Xa) = \mathbb{R}$.

{\rm (c)}\quad The operator $\Pdota$ is essentially selfadjoint.

{\rm (d)}\quad The set $\mathbb{R}$ coincides with the continuous spectrum of the selfadjoint operator $\Pa = \overline{\Pdota}$ given by {\rm (c)}, i.e., $\sigma(\Pa) = {\sigma}_c(\Pa) = \mathbb{R}$.
\end{thm}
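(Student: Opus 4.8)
The plan is to treat (a) and (b) by the classical theory of multiplication operators, and to reduce (c) and (d), via an explicit unitary built from the Jacobian of $f$, to the corresponding statements for the ordinary momentum operator $Q_\alpha:=-i\,\partial/\partial X_\alpha$ on $L^2(\mathbb{R}^n)$ in the new variables $X=f(x)$. For (a): $\fa$ is a real-valued continuous function on $\mathbb{R}^n$ and $\Xa$ is the maximal multiplication operator by $\fa$, which is symmetric; since multiplication by $(\fa(x)\pm i)^{-1}$ is a bounded inverse of $\Xa\pm i$, the operators $\Xa\pm i$ are onto $\eltwo$, so $\Xa$ is selfadjoint on the stated domain. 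For (b): the spectrum of such an operator is the closure of the range of $\fa$, and since $f$ is a bijection of $\mathbb{R}^n$ onto $\mathbb{R}^n$ we have $\fa(\mathbb{R}^n)=\mathbb{R}$, so $\sigma(\Xa)=\mathbb{R}$; moreover the Jacobian matrix of $f$ is everywhere invertible, so $\nabla\fa$ never vanishes and, by the implicit function theorem, each level set $\{x:\fa(x)=\lambda\}$ is a $C^1$ hypersurface, hence of Lebesgue measure zero, and $\Xa$ has no eigenvalue; as a selfadjoint operator has no residual spectrum, $\sigma_c(\Xa)=\sigma(\Xa)\setminus\sigma_p(\Xa)=\mathbb{R}$.

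For (c) and (d), let $J(x)$ denote the Jacobian determinant of $f$; it is a nowhere-vanishing $C^2$ function of constant sign, so $|J|^{1/2}\in C^2$. Define $W\colon\eltwo\to L^2(\mathbb{R}^n)$, from functions of $x$ to functions of $X$, by $(Wu)(X)=|J(f^{-1}(X))|^{-1/2}u(f^{-1}(X))$; the substitution $X=f(x)$, $dX=|J(x)|\,dx$, shows that $W$ is unitary, with $(W^{-1}\psi)(x)=|J(x)|^{1/2}\psi(f(x))$. The operator $Q_\alpha=-i\,\partial/\partial X_\alpha$ on its maximal domain is selfadjoint, and Fourier transformation in $X_\alpha$ gives $\sigma(Q_\alpha)=\sigma_c(Q_\alpha)=\mathbb{R}$. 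The crux is the pointwise identity
\[
W\,\Pa u \;=\; Q_\alpha\,Wu \qquad (u\in\ctwo),
\]
which, by Remark~\ref{rem:concrete} and the chain rule, is equivalent to
\[
\sum_{\beta=1}^n \frac{\partial}{\partial\xb}\!\left(\frac{\partial\xb}{\partial\Xa}\right) \;=\; -\sum_{\beta=1}^n \frac{\partial\xb}{\partial\Xa}\,\frac{\partial}{\partial\xb}\log|J| ;
\]
this last identity follows from the classical one $\sum_{\beta=1}^n \frac{\partial}{\partial\xb}\!\left(J\,\frac{\partial\xb}{\partial\Xa}\right)=0$, and with $f\in C^3$ all the coefficients appearing in $\Pa$ are continuous.

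Set $V:=W^{-1}Q_\alpha W$, a selfadjoint operator on $\eltwo$ (a unitary conjugate of the selfadjoint $Q_\alpha$). If $u\in\cthree$ then $Wu\in\ctwo\subseteq D(Q_\alpha)$, so $u\in D(V)$ and $Vu=W^{-1}Q_\alpha Wu=\Pa u=\Pdota u$ by the identity above; hence $\Pdota\subseteq V$, and therefore $\overline{\Pdota}\subseteq V$. Conversely, $W^{-1}\big(C_0^\infty(\mathbb{R}^n)\big)$ is a core for $V$, being the image under the unitary $W^{-1}$ of the core $C_0^\infty(\mathbb{R}^n)$ of $Q_\alpha$; each of its members has the form $v=|J|^{1/2}(\psi\circ f)$ with $\psi\in C_0^\infty(\mathbb{R}^n)$, so $v\in\ctwo$, and its mollifications $v_\varepsilon:=v*\eta_\varepsilon\in C_0^\infty(\mathbb{R}^n)\subseteq\cthree$ converge to $v$ in $C^1$-norm with supports in a fixed compact set; since the coefficients of $\Pa$ are continuous, $\Pdota v_\varepsilon\to\Pa v=Vv$ and $v_\varepsilon\to v$ in $\eltwo$, whence $v\in D(\overline{\Pdota})$. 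Thus a core of $V$ lies in $D(\overline{\Pdota})$, so $V\subseteq\overline{\Pdota}$ and hence $\overline{\Pdota}=V$; this is (c). Finally $\Pa=\overline{\Pdota}=W^{-1}Q_\alpha W$ is unitarily equivalent to $Q_\alpha$, so $\sigma(\Pa)=\sigma_c(\Pa)=\mathbb{R}$, which is (d).

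The main obstacle is the intertwining identity $W\,\Pa u=Q_\alpha\,Wu$: it is precisely here that DeWitt's symmetric ordering is indispensable — the symmetrizing term in the definition of $\Pa$ is exactly what makes the Jacobian weight $|J|^{-1/2}$ intertwine $\Pa$ with $Q_\alpha$, via the Piola-type identity above — and it is also where the domain bookkeeping is delicate, since $W$ maps $\cthree$ only into $\ctwo$, and not into any space of $C^3$ functions, so one cannot directly conjugate $\Pdota$ with $Q_\alpha$ but must route the argument through the intermediate core $W^{-1}\big(C_0^\infty(\mathbb{R}^n)\big)$ and a mollification.
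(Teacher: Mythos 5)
Your proposal is correct, but it reaches part (c) by a genuinely different route than the paper. The paper proves (c) directly via the basic criterion for essential selfadjointness: it takes $u\in\ker(\Pdota^{\ast}\pm i)$, shows (Lemmas 3.1--3.3) that $U=J^{-1}u$, viewed as a function of the $X_\alpha$'s, has a weak derivative satisfying $D_\alpha U=AU$, upgrades regularity through $H^{2}(a_\alpha,b_\alpha)$ and the Sobolev embedding, and then solves the resulting first-order equation explicitly, obtaining $u=C\sqrt{J}\,e^{\pm\Xa}$, which lies in $\eltwo$ only if $C=0$; the Jacobian identity you invoke as the Piola identity is exactly the paper's Lemma 3.4, used there for the same cancellation. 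Only afterwards (Section 4) does the paper introduce the weight $J^{-1/2}$ as a unitary and prove $\overline{\Pdota}=\overline{\Pddota}=U^{\ast}\bigl(-i\partial/\partial\Xa\bigr)U$, the identification $\overline{\Pdota}=\overline{\Pddota}$ resting on (c) already being known, and then derives (a), (b), (d) and the CCR by unitary equivalence. You instead obtain (c) purely by conjugation: you verify the intertwining $W\Pa u=Q_\alpha Wu$ on $\ctwo$ and show that $\cthree$ is a core of $W^{-1}Q_\alpha W$ by mollifying the functions $|J|^{1/2}(\psi\circ f)$, $\psi\in C_0^{\infty}$, which form the image under $W^{-1}$ of a core of $Q_\alpha$; this avoids the deficiency-space computation entirely and, incidentally, supplies the argument the paper leaves implicit in its Lemma 4.6. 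Similarly, you settle (a) and (b) by the standard theory of maximal multiplication operators (bounded inverses of $\Xa\pm i$, spectrum equal to the closure of the range, absence of eigenvalues because $\nabla\fa$ never vanishes so level sets have measure zero) rather than by conjugating to $M_{X_\alpha}$. What your route buys is economy: everything reduces to known facts about $-i\partial/\partial\Xa$ and multiplication operators, and the domain mismatch ($W$ maps $\cthree$ only into $\ctwo$) is handled by a short mollification; what the paper's route buys is a self-contained, explicit analysis of the deficiency equations, which does not presuppose that $C_0^{\infty}$ is a core of the flat momentum operator. For (d) the two arguments coincide: unitary equivalence with $-i\partial/\partial\Xa$.
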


We denote by $\Pa$ the selfadjoint operator $\overline{\Pdota}$ given by Theorem \ref{thm:sasp} (c).
\begin{thm}\label{thm:ccr}
The operators $\Xa$ and $\Pa$ satisfy the canonical commutation relations:
\[
[\Xa,\,\Pb]u = i\,{\delta}_{\alpha\beta}\, u, \quad [\Xa,\,\Xb]u = [\Pa,\,\Pb]u = 0
\]
{\it for} $u \in \ctwo$.
\end{thm}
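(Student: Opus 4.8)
The plan is to reduce the three identities to pointwise computations with the explicit first-order differential operator of Remark~\ref{rem:concrete}, after first checking that $\ctwo$ sits inside all the operator domains that occur.

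First I would verify that $C_0^1(\mathbb{R}^n) \subset D(\Pa)$ and that on this set $\Pa$ is given by the formula of Remark~\ref{rem:concrete}. For $u \in C_0^1(\mathbb{R}^n)$ take mollifications $u_k = u * \rho_k \in \cthree$ with supports in a fixed compact set and $u_k \to u$ in $C^1$; since $f$ is a $C^3$-diffeomorphism the coefficients $\partial \xb/\partial\Xa$ are $C^2$ and $\partial_{x_\beta}(\partial\xb/\partial\Xa)$ is $C^1$, hence bounded on that compact set, so $\Pdota u_k$ converges uniformly, and therefore in $\eltwo$, to the function obtained by applying the formula to $u$. As $\Pa = \overline{\Pdota}$, this gives $u \in D(\Pa)$ with $\Pa u$ equal to that function. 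With this in hand the domain requirements for $u \in \ctwo$ are routine: $\Xa u = \fa(x)u \in \ctwo \subset D(\Pb)$; the function $\Pb u$ is $C^1$ with compact support, hence lies in $D(\Xa)$ and, being in $C_0^1$, in $D(\Pa)$; and $u \in D(\Xa)$ trivially. Thus each commutator in the statement is a well-defined $\eltwo$ function obtained by composing the explicit operators, and can be evaluated pointwise.

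Next I would carry out the computation. Writing $\Pb = -i\sum_\gamma(\partial x_\gamma/\partial\Xb)\,\partial_{x_\gamma} + Q_\beta$ with $Q_\beta$ the zeroth-order (multiplication) part, the commutator with multiplication by $\fa(x)$ annihilates $Q_\beta$ and leaves
\[
[\Xa,\Pb]u = \fa\,\Pb u - \Pb(\fa u) = i\sum_\gamma \frac{\partial x_\gamma}{\partial\Xb}\,\frac{\partial \fa}{\partial x_\gamma}\,u .
\]
Since $\fa(x) = \Xa$, one has $\partial\fa/\partial x_\gamma = \partial\Xa/\partial x_\gamma$, and the chain rule $\sum_\gamma (\partial\Xa/\partial x_\gamma)(\partial x_\gamma/\partial\Xb) = \partial\Xa/\partial\Xb = \delta_{\alpha\beta}$ gives $[\Xa,\Pb]u = i\,\delta_{\alpha\beta}\,u$. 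The identity $[\Xa,\Xb]u = 0$ is immediate, both operators being multiplications and $u,\,\fa u,\,f_\beta u,\,\fa f_\beta u \in \eltwo$. For $[\Pa,\Pb]u = 0$ I would expand $\Pa(\Pb u) - \Pb(\Pa u)$; this is legitimate because $\Pb u \in C_0^1$, and applying the first-order operator $\Pa$ produces terms in $\partial^2 u/\partial x_\gamma\partial x_\delta$, which is allowed as $u \in C^2$. The terms carrying second derivatives of $u$ occur symmetrically in the pair of differentiation indices and cancel by equality of mixed partials of $u$; the remaining first-order and zeroth-order terms cancel after repeated use of $\sum_\gamma(\partial\Xa/\partial x_\gamma)(\partial x_\gamma/\partial\Xb) = \delta_{\alpha\beta}$, its $x$-derivatives, and the symmetry of second derivatives of the $C^3$ coefficient functions.

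I expect the vanishing of $[\Pa,\Pb]u$ to be the main obstacle: the expansion splits into several groups of terms (second-, first-, and zeroth-order in $u$, together with the cross terms from differentiating the coefficients) that must be matched against one another, and the cancellation in the lower-order terms is exactly what DeWitt's symmetrized ordering is designed to produce. A more conceptual route, worth noting but heavier to set up, is the change-of-variables unitary $V:\eltwo \to L^2(\mathbb{R}^n,dX)$, $(Vu)(X) = |\det(\partial x/\partial X)|^{1/2}\,u(x(X))$, which intertwines $\Xa$ with multiplication by $X_\alpha$ and, after verification on a core, $\Pa$ with $-i\,\partial/\partial X_\alpha$; all three relations then transfer from the standard canonical pair on $L^2(\mathbb{R}^n,dX)$. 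Since the theorem only asserts the relations on $\ctwo$, however, the direct pointwise computation is the cleaner option and avoids re-identifying cores under $V$.
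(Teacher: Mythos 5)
Your proof is correct in substance, but it takes a genuinely different route from the paper. The paper never computes the commutators directly: Theorem \ref{thm:ccr} is obtained in one line from Lemmas \ref{lem:relationp} and \ref{lem:relationx}, i.e.\ from the unitary equivalences $\Xa = U^{\ast}\mxa U$ and $\Pa = U^{\ast}(-i\,\partial/\partial\Xa)U$ together with Lemma \ref{lem:property}(b), $U\ctwo=\ctwo$, which lets the standard relations for $\mxa$ and $-i\,\partial/\partial\Xa$ on $\ctwo$ be pulled back with all domain questions settled automatically. This is exactly the ``more conceptual route'' you mention and set aside as heavier; note, however, that the paper needs that unitary machinery anyway for Theorem \ref{thm:sasp} (a), (b), (d), so for the authors the CCRs come for free, whereas your direct computation is self-contained and does not require identifying cores under $U$. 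Your version is sound: the mollification argument correctly shows $C_0^1(\mathbb{R}^n)\subset D(\Pa)$ with $\Pa$ given there by the formula of Remark \ref{rem:concrete}, which is precisely what is needed to compose the operators on $\ctwo$, and the computations of $[\Xa,\Pb]u$ and $[\Xa,\Xb]u$ are complete. The one place where real work is still left implicit is the cancellation of the lower-order terms in $[\Pa,\Pb]u$. The first-order terms cancel because $\sum_{\gamma}\frac{\partial x_\gamma}{\partial\Xa}\frac{\partial}{\partial x_\gamma}\bigl(\frac{\partial x_\delta}{\partial\Xb}\bigr)=\frac{\partial^2 x_\delta}{\partial\Xa\partial\Xb}$ is symmetric in $\alpha,\beta$; for the zeroth-order terms you must show $\frac{\partial b_\beta}{\partial\Xa}=\frac{\partial b_\alpha}{\partial\Xb}$ for the symmetrization term $b_\beta=\frac12\sum_\gamma\frac{\partial}{\partial x_\gamma}\bigl(\frac{\partial x_\gamma}{\partial\Xb}\bigr)$, and the cleanest way to see this is the paper's Lemma \ref{lem:cal}, which gives $b_\beta=-\frac12\,\frac{\partial}{\partial\Xb}\log J$, so the cross-derivatives agree by symmetry of second derivatives of $\log J(x(X))$. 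Your ``repeated use of the chain-rule identity'' can be pushed through, but invoking Lemma \ref{lem:cal} (or the exactness it expresses) would turn that sketch into a finished argument; as you rightly observe, this cancellation is exactly what DeWitt's symmetric ordering is designed to produce.
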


We denote by $( \, \cdot \, , \, \cdot \, )$ the inner product of $\eltwo$:
\[
( u, \, v ) = \int_{\mathbb{R}^n} u(x) \, \overline{v(x)}\, dx \, ,\quad u, \, v \in \eltwo.
\]
We deal with only the case
\begin{equation}\label{eq:det}
J(x) = \frac{\partial(X_1,\ldots,X_n)}{\partial (x_1,\ldots,x_n) } > 0,
\end{equation}
since we can deal with another case $J(x) < 0$ in a similar manner.


\section{Proof of Theorem \ref{thm:sasp} (c)}

In this section we show that
\[
\ker ( {\Pdota}^{\ast} \pm i ) = \left\{ 0 \right\}
\]
to prove Theorem \ref{thm:sasp} (c). Here ${\Pdota}^{\ast}$ denotes the adjoint operator of $\Pdota$.

Let $u \in \ker ( {\Pdota}^{\ast} \pm i )$ and set
\begin{equation}\label{eq:a}
a(x) = \frac{1}{ \, 2 \, }\sum_{\beta = 1}^n \frac{\partial}{\partial \xb}
\left(\frac{ \, \partial \xb \, }{ \, \partial \Xa \, }\right) \pm 1.
\end{equation}
Then $a \in C^1(\mathbb{R}^n)$. Both $u$ and $a$ are functions of $\xa$'s, and hence of $\Xa$'s because of Definition \ref{defn:ptqm}. We now consider the following integral:
\begin{equation}\label{eq:integral}
\int_{\mathbb{R}^n} {J(x(X))}^{-1} \, u(x(X))
\left\{ \frac{\partial}{ \, \partial \Xa \, } + a(x(X)) \right\}
\overline{\Phi (X)} \, dX,
\end{equation}
where $\Phi \in \cthree$ is a function of $\Xa$'s.
\begin{lem}\label{lem:zero}
Let $u \in \ker ( {\Pdota}^{\ast} \pm i )$ and let $a$ be given by \eqref{eq:a}. Then for $\Phi \in \cthree$,
\[
\int_{\mathbb{R}^n} {J(x(X))}^{-1} \, u(x(X))
\left\{ \frac{\partial}{ \, \partial \Xa \, } + a(x(X)) \right\}
\overline{\Phi (X)} \, dX = 0.
\]
\end{lem}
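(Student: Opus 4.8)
The statement is essentially an integration-by-parts identity, so the plan is to unwind the definition of the adjoint $\Pdota^\ast$ and transport the resulting integral from $x$-coordinates to $X$-coordinates via the change of variables $X = f(x)$. First I would recall that $u \in \ker(\Pdota^\ast \pm i)$ means $(\Pdota^\ast u, v) = \mp i (u,v)$ for all $v \in \cthree$, equivalently $(u, \Pdota v) = (u, \mp i v)$, i.e. $(u, (\Pdota \pm i) v) = 0$ for all $v \in \cthree$. Using the concrete expression for $\Pdota$ from Remark \ref{rem:concrete}, namely
\[
\Pdota = -i \sum_{\beta=1}^n \frac{\partial \xb}{\partial \Xa} \frac{\partial}{\partial \xb} - \frac{i}{2} \sum_{\beta=1}^n \frac{\partial}{\partial \xb}\left(\frac{\partial \xb}{\partial \Xa}\right),
\]
I would write out $(u, (\Pdota \pm i)v) = 0$ explicitly as an integral over $\mathbb{R}^n$ in the variable $x$ with Lebesgue measure $dx$.

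The key step is then the substitution $x = x(X)$, under which $dx = J(x(X))^{-1} dX$ (since $J > 0$ by \eqref{eq:det}), and the test function $v$ is taken to be $v(x) = \Phi(f(x))$ where $\Phi \in \cthree$; note $v \in \cthree$ because $f$ is a $C^3$-diffeomorphism. Under the chain rule, $\sum_\beta \frac{\partial \xb}{\partial \Xa} \frac{\partial v}{\partial \xb} = \frac{\partial \Phi}{\partial \Xa}$ (as a function on $X$-space), which is exactly what turns the first-order part of $\Pdota$ into the clean operator $\partial/\partial \Xa$ appearing in the lemma. The zeroth-order coefficient $\frac12 \sum_\beta \frac{\partial}{\partial \xb}\left(\frac{\partial \xb}{\partial \Xa}\right)$ together with the $\pm i$ term, after dividing through by the overall factor of $-i$, produces precisely the function $a(x)$ defined in \eqref{eq:a}. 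Collecting everything, the identity $(u,(\Pdota\pm i)v)=0$ becomes
\[
\int_{\mathbb{R}^n} J(x(X))^{-1}\, u(x(X)) \left\{ \frac{\partial}{\partial \Xa} + a(x(X)) \right\} \overline{\Phi(X)}\, dX = 0,
\]
which is the claim. (One must be slightly careful that the complex conjugation sits on $v$, so $\overline{v(x)} = \overline{\Phi(f(x))}$, and that the factor $-i$ from $\Pdota$ combines with the $\mp i$ to give $-i(\cdots \pm 1)$, matching the sign convention in $a$.)

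I expect the main obstacle to be bookkeeping rather than anything deep: tracking the complex conjugates and the sign of the $\pm i$ term so that the coefficient function works out to be exactly $a(x)$ as written, and justifying that no boundary terms arise. Since $\Phi \in \cthree$ has compact support, $v = \Phi \circ f$ has compact support as well (as $f$ is a diffeomorphism of $\mathbb{R}^n$), so the integration by parts implicit in passing between $(u, \Pdota v)$ and the derivative landing on $\overline{v}$ is legitimate with vanishing boundary contributions — but I would state this carefully since it is the one analytic point. The smoothness hypotheses ($f \in C^3$, hence the coefficients $\partial \xb/\partial \Xa \in C^2$ and $a \in C^1$) are exactly what is needed for all the differentiations and the change of variables to be valid.
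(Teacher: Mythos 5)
Your proposal is correct and is essentially the paper's own argument: substitute $X=f(x)$, set $\phi=\Phi\circ f\in\cthree=D(\Pdota)$, and recognize the resulting integrand as (up to a factor of $-i$) $u$ times $\overline{(\Pdota\mp i)\phi}$, which integrates to zero by the definition of the adjoint since $u\in\ker({\Pdota}^{\ast}\pm i)$. The only slip is the intermediate display: with the paper's convention $(u,v)=\int u\,\overline{v}\,dx$ (conjugate-linear in the second slot), ${\Pdota}^{\ast}u=\mp i\,u$ gives $\bigl(u,(\Pdota\mp i)v\bigr)=0$, not $\bigl(u,(\Pdota\pm i)v\bigr)=0$ as you first wrote --- your later parenthetical uses the correct $\mp i$ combination, which is what produces $a$ with the $\pm1$ of \eqref{eq:a}; also note that no integration by parts or boundary-term discussion is needed here, since the derivative already acts on $\overline{\Phi}$ and the adjoint identity does all the work.
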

\begin{proof}
The substitution $X = f(x)$ (see Definition \ref{defn:ptqm}) turns \eqref{eq:integral} into
\begin{eqnarray*}
\, & &\int_{\mathbb{R}^n} {J(x(X))}^{-1} \, u(x(X))
\left\{ \frac{\partial}{ \, \partial \Xa \, } + a(x(X)) \right\}
\overline{\Phi (X)}\, dX \\
&=& \int_{\mathbb{R}^n} u(x) \left\{ \sum_{\beta = 1}^n
\frac{ \, \partial \xb \, }{ \, \partial \Xa \, }
\frac{\partial}{\partial \xb} + a(x) \right\}\overline{\Phi (X(x))}\, dx.
\end{eqnarray*}
Set $\phi (x) = \Phi (X(x))$. Then $\phi \in \cthree = D( \Pdota )$. Therefore,
\[
\int_{\mathbb{R}^n} u(x) \left\{ \sum_{\beta = 1}^n
\frac{ \, \partial \xb \, }{ \, \partial \Xa \, }
\frac{\partial}{\partial \xb} + a(x) \right\}\overline{\Phi (X(x))}\, dx
= -i \left( u, \left( {\Pdota} \mp i \right) \phi \right) = 0.
\]
The lemma follows.
\end{proof}

Set
\begin{equation}\label{eq:AU}
U(X) = {J(x(X))}^{-1} \, u(x(X)) \quad {\rm and} \quad A(X) = a(x(X)),
\end{equation}
where $u \in \ker ( {\Pdota}^{\ast} \pm i )$ and $a$ is given by \eqref{eq:a}. Note that $U$ and $A$ are functions of $\Xa$'s. Let
\begin{equation}\label{eq:omega}
\Omega = \left( a_1, \, b_1 \right) \times \cdots \times
\left( a_{\alpha}, \, b_{\alpha} \right) \times \cdots \times
\left( a_n, \, b_n \right) \in \mathbb{R}^n,
\end{equation}
where $a_{\alpha}$, $b_{\alpha} \in \mathbb{R}$ and $| a_{\alpha} |, \, | b_{\alpha} | < \infty$ \  $(\alpha = 1, 2, \dots, n)$. Since $\Phi \in \cthree$, we let $\Phi \in C_0^{\infty}(\Omega)$. Here $\Omega$ is given by \eqref{eq:omega} and $X=(X_1,\,\ldots,\,\Xa,\,\ldots,\,X_n) \in \Omega$. Lemma \ref{lem:zero} also holds for $\Phi \in C_0^{\infty}(\Omega)$. Moreover, it is easy to see that
\[
U, \, AU \in L_{{\rm loc}}^1(\Omega).
\]
Lemma \ref{lem:zero} thus implies the following.
\begin{lem}\label{lem:weakd}
Let $U$ and $A$ be as in \eqref{eq:AU}. Then there is the weak derivative $\Da U$ of $U$ satisfying
\[
\Da U = AU, \qquad \Da = \frac{\partial}{ \, \partial \Xa \, } \, .
\]
\end{lem}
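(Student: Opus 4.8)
The plan is to unwind the definition of weak derivative and read it off directly from Lemma \ref{lem:zero}. Recall that a function $g \in L^1_{\rm loc}(\Omega)$ is the weak derivative $\partial U/\partial X_\alpha$ provided
\[
\int_\Omega U(X)\,\frac{\partial}{\partial X_\alpha}\,\psi(X)\, dX = -\int_\Omega g(X)\,\psi(X)\, dX
\]
for every real-valued test function $\psi \in C_0^\infty(\Omega)$. So I would take $g = AU$, which lies in $L^1_{\rm loc}(\Omega)$ by the observation recorded just before the lemma, and verify this identity. The point is that Lemma \ref{lem:zero}, which was noted to remain valid for $\Phi \in C_0^\infty(\Omega)$, states exactly that
\[
\int_\Omega U(X)\left\{ \frac{\partial}{\partial X_\alpha} + A(X) \right\}\overline{\Phi(X)}\, dX = 0
\]
in the notation of \eqref{eq:AU}.

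First I would restrict attention to real-valued $\Phi$; since $C_0^\infty(\Omega)$ test functions may be taken real, the complex conjugate on $\Phi$ is harmless, and separating real and imaginary parts of a general test function reduces the complex case to the real one. Then the displayed identity rearranges to
\[
\int_\Omega U(X)\,\frac{\partial \Phi}{\partial X_\alpha}\, dX = -\int_\Omega A(X)\,U(X)\,\Phi(X)\, dX,
\]
which is precisely the defining relation for $AU$ to be the weak $X_\alpha$-derivative of $U$. Setting $\psi = \Phi$ and comparing with the definition above completes the argument.

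The only genuine points requiring care — and hence the closest thing to an obstacle — are the two integrability facts: that $U$ itself lies in $L^1_{\rm loc}(\Omega)$ (so that the left-hand integral makes sense and the notion of weak derivative applies) and that $AU \in L^1_{\rm loc}(\Omega)$ (so that it is an admissible weak derivative). The latter is asserted in the line ``$U,\,AU \in L^1_{\rm loc}(\Omega)$'' preceding the lemma; it follows because $A = a(x(X))$ is continuous, hence bounded on the bounded set $\Omega$, while $U = J(x(X))^{-1}u(x(X))$ is locally integrable since $J$ is continuous and nonvanishing (by \eqref{eq:det}, $J>0$) on the compact closures of subsets of $\Omega$ and $u \in L^2(\mathbb{R}^n) \subset L^1_{\rm loc}$. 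Granting these, the proof is a one-line rearrangement of Lemma \ref{lem:zero}, so I would keep it short.
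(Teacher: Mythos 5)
Your proposal is correct and follows exactly the route the paper intends: the paper states Lemma \ref{lem:weakd} as an immediate consequence of Lemma \ref{lem:zero} applied to $\Phi \in C_0^{\infty}(\Omega)$ together with the observation $U,\,AU \in L_{\rm loc}^1(\Omega)$, which is precisely your rearrangement of the identity into the defining relation for the weak derivative. Your added remarks on real test functions and on why $U$ and $AU$ are locally integrable just make explicit what the paper leaves implicit.
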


Let us regard $U$ and $A$ as functions of $\Xa$ only, where $a_{\alpha} < \Xa < b_{\alpha}$.
\begin{lem}\label{lem:regularity}
Let $U$ be given by \eqref{eq:AU} and let $\Omega$ be as in \eqref{eq:omega}. Then
\[
U \in C^1\left( a_{\alpha}, \, b_{\alpha}\right).
\]
\end{lem}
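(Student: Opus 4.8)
The plan is to bootstrap from the weak-derivative identity $\Da U = AU$ of Lemma \ref{lem:weakd} to classical $C^1$-regularity, by viewing everything as a one-variable ODE in $\Xa$ with the other coordinates frozen. First I would fix all variables $X_\gamma$ with $\gamma \neq \alpha$ in their respective intervals $(a_\gamma, b_\gamma)$ and regard $U(\cdot) = U(X_1,\dots,\Xa,\dots,X_n)$ and $A(\cdot)$ as functions of the single variable $\Xa \in (a_\alpha,b_\alpha)$. The key structural point is that $A$ is \emph{smooth} in $\Xa$: by \eqref{eq:a} and \eqref{eq:AU}, $A(X) = a(x(X))$ is built from the entries $\partial\xb/\partial\Xa$ and their first derivatives, and since $f$ is a $C^3$-diffeomorphism (Definition \ref{defn:ptqm}), the inverse $x(X)$ is $C^3$, so $a \in C^1(\mathbb{R}^n)$ as already noted after \eqref{eq:a}; in particular $A$ is continuous, hence locally bounded and locally integrable on $(a_\alpha,b_\alpha)$.

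Next I would solve the scalar linear ODE $\Da U = A U$ explicitly. Set $E(\Xa) = \exp\!\left(-\int_{c}^{\Xa} A(\,\cdot\,,t,\,\cdot\,)\,dt\right)$ for a fixed basepoint $c \in (a_\alpha,b_\alpha)$; since $A$ is continuous in $t$, $E$ is $C^1$ and nowhere zero, with $\Da E = -A E$. Then in the weak sense $\Da(EU) = (\Da E)U + E(\Da U) = -AEU + EAU = 0$, so the one-variable weak derivative of $EU$ vanishes on the interval. A standard lemma (a locally integrable function on an interval whose distributional derivative is zero equals a constant a.e.) gives $E(\Xa)U(\Xa) = \text{const}$ a.e. on $(a_\alpha,b_\alpha)$, whence $U(\Xa) = \text{const}\cdot E(\Xa)^{-1}$ a.e. Since $E^{-1} = \exp\!\left(\int_c^{\Xa} A\,dt\right)$ is $C^1$ in $\Xa$, the function $U$ agrees a.e.\ with a $C^1$ function of $\Xa$, which is exactly the claim $U \in C^1(a_\alpha, b_\alpha)$.

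The main obstacle is not the ODE solution itself — that is routine — but making sure the one-variable argument is legitimately extracted from the $n$-variable weak-derivative statement of Lemma \ref{lem:weakd}. The weak derivative $\Da U = AU$ holds in $\mathcal{D}'(\Omega)$; to run the Fubini-type slicing that lets me treat $\Xa$ as the only active variable I need $U, AU \in L^1_{\mathrm{loc}}(\Omega)$ (already recorded before Lemma \ref{lem:weakd}) together with the fact that testing against $\Phi(X) = \psi_\alpha(\Xa)\prod_{\gamma\neq\alpha}\chi_\gamma(X_\gamma) \in C_0^\infty(\Omega)$ forces, for a.e.\ choice of the frozen coordinates, the one-dimensional identity $\int \psi_\alpha' U\,d\Xa = -\int \psi_\alpha AU\,d\Xa$. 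Once that slice identity is in hand the argument above applies on almost every slice; and since the resulting formula $U = \text{const}\cdot\exp(\int A)$ depends continuously (indeed $C^1$) on $\Xa$ with the constant possibly depending on the other coordinates, regarding $U$ as a function of $\Xa$ alone — as the lemma's statement explicitly instructs — we conclude $U \in C^1(a_\alpha,b_\alpha)$. I would keep this slicing step brief in the writeup, citing it as a standard property of weak derivatives.
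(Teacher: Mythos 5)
Your proof is correct, but it takes a genuinely different route from the paper. The paper never solves the equation inside this lemma: it bootstraps regularity through Sobolev spaces, showing first $U \in L^2(\Omega)$ (via the change of variables $X = f(x)$ and the positivity of $J$), hence $U \in L^2(a_\alpha,b_\alpha)$ and $\Da U = AU \in L^2(a_\alpha,b_\alpha)$, so $U \in H^1(a_\alpha,b_\alpha)$; then, since $A \in C^1$, $\Da U = AU \in H^1(a_\alpha,b_\alpha)$, so $U \in H^2(a_\alpha,b_\alpha)$, and the Sobolev embedding theorem gives $U \in C^1(a_\alpha,b_\alpha)$. You instead integrate the first-order linear equation directly with the integrating factor $E = \exp\bigl(-\int A\,d\Xa\bigr)$ and use the elementary fact that a locally integrable function on an interval with vanishing distributional derivative is a.e.\ constant, concluding $U = C\cdot E^{-1}$ a.e.\ with $E^{-1}$ of class $C^1$. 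Your route is more elementary (no Sobolev embedding, no $H^2$ bootstrap, and no need for the $L^2(\Omega)$ estimate, since $L^1_{\mathrm{loc}}$ suffices), and it already produces the explicit solution formula that the paper only derives afterwards, in the continuation of the proof of Theorem \ref{thm:sasp} (c), via the substitution $u = \sqrt{J}\,v$ and Lemma \ref{lem:cal}; in effect you merge the lemma with that subsequent step. The paper's approach buys a clean separation between a pure regularity statement and the later explicit integration. Both arguments share the same slicing caveat --- passing from the $n$-dimensional weak derivative of Lemma \ref{lem:weakd} to a statement in the single variable $\Xa$ holds only for almost every choice of the frozen coordinates --- which you at least acknowledge explicitly, whereas the paper passes over it silently when it regards $U$ and $A$ as functions of $\Xa$ only.
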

\begin{proof} \quad It is easy to see from \eqref{eq:det} that $U \in L^2(\Omega)$ since
\begin{eqnarray*}
\int_{\Omega} { \left| U(X) \right| }^2 \, dX &\leq&
\sup_{X \in \Omega}\left\{ {J(x(X))}^{-1} \right\}
\cdot \int_{\Omega} {J(x(X))}^{-1}{\left| \, u(x(X)) \, \right|}^2 \, dX \\
&=& \sup_{X \in \Omega}\left\{ {J(x(X))}^{-1} \right\}
\cdot \int_{f^{-1}(\Omega)} {\left| \, u(x) \, \right|}^2 \, dx \\
&<& \infty ,
\end{eqnarray*}
where $f^{-1}(\Omega)$ is the image of $\Omega$ under the inverse of the map $f$. Hence $U \in L^2 (a_{\alpha}, \, b_{\alpha})$, which implies $AU \in L^2 (a_{\alpha}, \, b_{\alpha})$. Therefore, $\Da U = AU \in L^2 (a_{\alpha}, \, b_{\alpha})$, and hence
\[
U \in H^1 (a_{\alpha}, \, b_{\alpha}).
\]
Since $A \in C^1(\mathbb{R}^n)$, it follows that $AU \in H^1 (a_{\alpha}, \, b_{\alpha})$. Therefore, $\Da U \in H^1 (a_{\alpha}, \, b_{\alpha})$. Thus
\[
U \in H^2 (a_{\alpha}, \, b_{\alpha}).
\]
The lemma follows from the Sobolev embedding theorem (see e.g. Goldstein \cite[p.135]{goldstein} or Reed and Simon \cite[p.52]{reedsimon}).
\end{proof}

Noting Lemmas \ref{lem:weakd} and \ref{lem:regularity} we now solve
\[
\Da U = AU,  \qquad \Da = \frac{\partial}{ \, \partial \Xa \, }
\]
to obtain the solution explicitly. It follows from \eqref{eq:a} and \eqref{eq:AU} that
\[
\left\{ \frac{\partial}{ \, \partial \Xa \, }
- \frac{1}{ \, 2 \, }\sum_{\beta = 1}^n \frac{\partial}{\partial \xb}
\left(\frac{ \, \partial \xb \, }{ \, \partial \Xa \, }\right) \mp 1 \right\}
{J(x)}^{-1} u(x(X)) = 0.
\]
Hence
\begin{equation}\label{eq:pde}
\left\{ \frac{\partial}{ \, \partial \Xa \, }
+ \frac{1}{ \, 2 \, }\sum_{\beta = 1}^n \frac{\partial}{\partial \xb}
\left(\frac{ \, \partial \xb \, }{ \, \partial \Xa \, }\right) \mp 1 \right\}
u(x(X)) = 0.
\end{equation}
Here we used the following lemma.
\begin{lem}\label{lem:cal}
\[
\left\{ \frac{\partial}{\partial \Xa} J(x) \right\} {J(x)}^{-1}
= - \sum_{\beta = 1}^n \frac{\partial}{\partial \xb}
\left(\frac{ \, \partial \xb \, }{ \, \partial \Xa \, }\right).
\]
\end{lem}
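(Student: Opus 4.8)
The statement is the classical logarithmic-derivative formula for a Jacobian, and the plan is to derive it from the rule for differentiating a determinant (Jacobi's formula) together with the chain rule. Throughout, $J$ is regarded as a function of the $\Xa$'s via the inverse map $x=x(X)$, which is well defined and of class $C^3$ because $f$ is a $C^3$-diffeomorphism (Definition \ref{defn:ptqm}); in particular $J$ never vanishes, and all the mixed second partial derivatives appearing below are continuous, hence symmetric.

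First I would write $J(x)=\det M(x)$ with $M(x)=\bigl(\partial X_i/\partial x_j\bigr)$ and note $\bigl(M^{-1}\bigr)_{ij}=\partial x_i/\partial X_j$. Applying the chain rule and then Jacobi's formula $\partial_{x_\gamma}\det M=\det M\cdot\operatorname{tr}\!\bigl(M^{-1}\partial_{x_\gamma}M\bigr)$ gives
\[
\frac{\partial J}{\partial \Xa}
= \sum_{\gamma=1}^n \frac{\partial x_\gamma}{\partial \Xa}\,\frac{\partial J}{\partial x_\gamma}
= J \sum_{\gamma,i,j=1}^n \frac{\partial x_\gamma}{\partial \Xa}\,\frac{\partial x_j}{\partial X_i}\,\frac{\partial^2 X_i}{\partial x_\gamma\,\partial x_j}.
\]
Next I would expand the right-hand side of the lemma. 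Differentiating the identity $\sum_k(\partial x_\beta/\partial X_k)(\partial X_k/\partial x_j)=\delta_{\beta j}$ with respect to $x_\gamma$, then contracting with $\partial x_j/\partial\Xa$ and summing over $j$ (using $\sum_j(\partial X_k/\partial x_j)(\partial x_j/\partial\Xa)=\delta_{k\alpha}$), yields
\[
\frac{\partial}{\partial x_\gamma}\!\left(\frac{\partial x_\beta}{\partial\Xa}\right)
= -\sum_{i,j=1}^n \frac{\partial x_\beta}{\partial X_i}\,\frac{\partial x_j}{\partial\Xa}\,\frac{\partial^2 X_i}{\partial x_\gamma\,\partial x_j}.
\]
Setting $\gamma=\beta$ and summing over $\beta$ exhibits $\sum_\beta\partial_{x_\beta}(\partial x_\beta/\partial\Xa)$ as a triple sum of exactly the same shape as the one above for $(\partial J/\partial\Xa)J^{-1}$; the two coincide after relabelling the summation indices and a single use of $\partial^2 X_i/(\partial x_\gamma\,\partial x_j)=\partial^2 X_i/(\partial x_j\,\partial x_\gamma)$, which proves the lemma.

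An alternative, shorter route is to invoke the Piola identity $\sum_{\beta}\partial_{x_\beta}(\operatorname{cof}M)_{\alpha\beta}=0$: since $(\operatorname{cof}M)_{\alpha\beta}=J\,\partial x_\beta/\partial\Xa$, expanding the derivative of this product and identifying $\sum_\beta(\partial_{x_\beta}J)(\partial x_\beta/\partial\Xa)$ with $\partial J/\partial\Xa$ gives the claim at once. Either way, the argument is essentially index bookkeeping; there is no real analytic obstacle, and the only point requiring care is that one genuinely needs the $C^2$ regularity (guaranteed here by the $C^3$-diffeomorphism hypothesis) to commute the mixed second derivatives, without which the two sides need not agree.
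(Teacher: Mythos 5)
Your proposal is correct and is essentially the paper's own argument: both proofs rest on differentiating the determinant (Jacobi's formula, which the paper implements by row-by-row differentiation plus the cofactor/inverse relation), the differentiated identity $M M^{-1}=I$, the chain rule, and the symmetry of mixed second partials, merely arranged in a different order --- the paper differentiates $J$ directly with respect to $\Xa$ and converts to $x$-derivatives at the end, whereas you pass to $x$-derivatives at the outset and match the resulting triple sums. Your alternative via the Piola identity is a valid, slightly slicker repackaging of the same index computation.
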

\begin{proof}
A straightforward calculation gives
\[
\frac{\partial}{\partial \Xa} J(x)
= \sum_{\lambda=1}^n \left| \begin{array}{cccccc}
\dfr{\partial X_1}{\partial x_1}\hm \dfr{\partial X_1}{\partial x_2}
\hm\cdots &\cdots\hm\dfr{\partial X_1}{\partial x_n}\\[2mm]
\vdots\hm\vdots\hm\vdots &\vdots\hm\vdots\\[2mm]
\dfr{\partial X_{\lambda-1}}{\partial x_1}\hm 
\dfr{\partial X_{\lambda-1}}{\partial x_2}
\hm\cdots &\cdots\hm\dfr{\partial X_{\lambda-1}}{\partial x_n}\\[5mm]
\dfr{\partial}{\partial \Xa}
\Bigl(\dfr{\partial X_{\lambda}}{\partial x_1}\Bigr)\hm
\dfr{\partial}{\partial \Xa}
\Bigl(\dfr{\partial X_{\lambda}}{\partial x_2}\Bigr)
\hm\cdots&\cdots\hm\dfr{\partial}{\partial \Xa}
\Bigl(\dfr{\partial X_{\lambda}}{\partial x_n}\Bigr)\\[5mm]
\dfr{\partial X_{\lambda+1}}{\partial x_1}\hm 
\dfr{\partial X_{\lambda+1}}{\partial x_2}
\hm\cdots\hm\cdots\hm\dfr{\partial X_{\lambda+1}}{\partial x_n}\\[2mm]
\vdots\hm\vdots\hm\vdots &\vdots\hm\vdots\\[2mm]
\dfr{\partial X_n}{\partial x_1}\hm\dfr{\partial X_n}{\partial x_2}
\hm\cdots\hm\cdots\hm\dfr{\partial X_n}{\partial x_n}
\end{array}\right| \, .
\]
Therefore,
\begin{eqnarray*}
\left\{ \frac{\partial}{\partial \Xa} J(x) \right\} {J(x)}^{-1}
&=& \sum_{\lambda, \beta=1}^n \frac{\partial}{\partial \Xa}
\left(\frac{\partial X_{\lambda}}{\partial \xb}\right)
\cdot\frac{\partial \xb}{\partial X_{\lambda}}\\[4mm]
&=& - \sum_{\lambda, \beta=1}^n \frac{\partial X_{\lambda}}{\partial \xb}
\cdot\frac{\partial}{\partial \Xa}
\left( \frac{\partial \xb}{\partial X_{\lambda}} \right) \\[4mm]
&=& - \sum_{\beta = 1}^n \frac{\partial}{\partial \xb}
\left(\frac{ \, \partial \xb \, }{ \, \partial \Xa \, }\right).
\end{eqnarray*}
\end{proof}

Let us continue the proof of Theorem \ref{thm:sasp} (c). Set
\[
u(x(X)) = \sqrt{ \, J(x(X))  \,} v(x(X)).
\]
Applying again Lemma \ref{lem:cal} turns \eqref{eq:pde} into
\[
\left( \frac{\partial}{ \, \partial \Xa \, } \mp 1 \right) v(x(X)) = 0.
\]
Therefore, $v(x(X)) = C(X_1, \dots, X_{\alpha - 1}, X_{\alpha + 1}, \dots, X_n )e^{\pm \Xa}$, and hence
\[
u(x(X)) = C(X_1, \dots, X_{\alpha - 1}, X_{\alpha + 1}, \dots, X_n )
\sqrt{ \, J(x(X)) \, } e^{\pm \Xa} \, .
\]
Here the function $C$ does not depend on $\Xa$. Since $u \in \eltwo$,
\begin{eqnarray*}
& & \int_{\mathbb{R}} e^{\pm 2 \Xa} \, d\Xa \times \\
& & \quad \times
\int_{\mathbb{R}^{n - 1}}
{\left| C(X_1, \dots, X_{\alpha - 1}, X_{\alpha + 1}, \dots, X_n ) \right|}^2
dX_1 \cdots dX_{\alpha - 1}dX_{\alpha + 1} \cdots dX_n < \infty \, .
\end{eqnarray*}
But
\[
\int_{\mathbb{R}} e^{\pm 2 \Xa} \, d\Xa = \infty \, .
\]
Hence $C(X_1, \dots, X_{\alpha - 1}, X_{\alpha + 1}, \dots, X_n ) = 0$,
which implies $u = 0$. Thus
\[
\ker ( {\Pdota}^{\ast} \pm i ) = \left\{ 0 \right\} .
\]

The proof of Theorem \ref{thm:sasp} (c) is complete.


\section{Proofs of the rest}

In this section we prove the rest of our main results from the viewpoint of unitary equivalence.

Let $u \in \eltwo$. Since each point transformation in quantum mechanics is a $C^3$-diffeomorphism, the function $x \mapsto u(x)/\sqrt{J(x)}$ can be regarded as a function of $\Xa$'s. We therefore set
\begin{equation}\label{eq:tilde}
\ut (X) = \frac{u(x(X))}{\,\sqrt{J(x(X))}\, },
\end{equation}
where $\ut$ is a function of $\Xa$'s. A straightforward calculation gives
\[
\int_{\mathbb{R}^n} \left| \ut (X) \right|^2 \, dX = \int_{\mathbb{R}^n} \left| u(x) \right|^2 \, dx < \infty.
\]
Hence $\ut \in \eltwo$. Let us define $U: \eltwo \to \eltwo$ by
\[
U: \, u \longmapsto \ut,
\]
where $u$ (resp. $\ut$) is a function of $\xa$'s (resp. of $\Xa$'s).

\begin{lem}\label{lem:property}
The operator $U: \eltwo \to \eltwo$ is unitary and satisfies the following.

{\rm (a)}\quad $U\cthree \subset \ctwo$,

{\rm (b)}\quad $U\ctwo = \ctwo$.
\end{lem}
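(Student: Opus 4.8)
The plan is to verify directly that $U$ is an isometry onto all of $L^2(\mathbb{R}^n)$ and then to track how the substitution $X = f(x)$ interacts with the differentiability of $\ut$. For the unitarity claim, the change-of-variables computation already displayed in the excerpt shows that $U$ is isometric: $\int_{\mathbb{R}^n}|\ut(X)|^2\,dX = \int_{\mathbb{R}^n}|u(x)|^2\,dx$, using $dX = J(x)\,dx$ and \eqref{eq:det}. So it remains to exhibit a two-sided inverse. I would define $V: \eltwo \to \eltwo$ by $(Vw)(x) = \sqrt{J(x)}\,w(X(x))$ for $w$ a function of the $\Xa$'s, which by the same change of variables (run in the other direction, using the $C^3$-diffeomorphism $f^{-1}$) is again isometric, and then check $VU = \mathrm{id}$ and $UV = \mathrm{id}$ by plugging in the formulas: $(VUu)(x) = \sqrt{J(x)}\cdot u(x(X(x)))/\sqrt{J(x(X(x)))} = u(x)$ since $x(X(x)) = x$, and symmetrically for $UV$. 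A surjective isometry on a Hilbert space is unitary, which gives the first assertion.

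For part (a), let $u \in \cthree$, so $u$ is a $C^3$ function of the $\xa$'s with compact support. Then $\ut(X) = u(x(X))/\sqrt{J(x(X))}$. The map $X \mapsto x(X)$ is a $C^3$-diffeomorphism, hence each $x_\beta(X)$ is $C^3$ in the $\Xa$'s; consequently $J(x(X)) = \partial(X)/\partial(x)$ evaluated along $x = x(X)$ is a $C^2$ function of the $\Xa$'s (it is a polynomial in the first partials $\partial X_\lambda/\partial x_\beta$, which are $C^2$ in $x$ and hence, composed with the $C^3$ map $x(\cdot)$, are $C^2$ in $X$), and it is bounded away from zero on the image of any bounded set by \eqref{eq:det}; so $1/\sqrt{J(x(X))}$ is $C^2$ in $X$ locally. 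Since $u(x(X))$ is $C^3$ in $X$ (composition of the $C^3$ function $u$ with the $C^3$ map $x(\cdot)$), the product $\ut$ is $C^2$ in $X$. Compact support is preserved because $f$ and $f^{-1}$ are homeomorphisms, so the support of $\ut$ is $f(\mathrm{supp}\,u)$, which is compact. Hence $\ut \in \ctwo$, i.e. $U\cthree \subset \ctwo$.

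For part (b), the inclusion $U\ctwo \subset \ctwo$ follows from exactly the same argument with ``$C^3$'' downgraded to ``$C^2$'' in the relevant spot: if $u \in \ctwo$ then $u(x(X))$ is $C^2$ in $X$, $1/\sqrt{J(x(X))}$ is $C^2$ in $X$, the product is $C^2$, and the support is compact. For the reverse inclusion $\ctwo \subset U\ctwo$, I would use the inverse operator $V$ constructed above: given $w \in \ctwo$ (a function of the $\Xa$'s), the function $Vw$, given by $(Vw)(x) = \sqrt{J(x)}\,w(X(x))$, is $C^2$ in the $\xa$'s and compactly supported by the symmetric argument (now $\sqrt{J(x)}$ itself is $C^2$ in $x$ since $f \in C^3$, and $w(X(x))$ is $C^2$ in $x$), so $Vw \in \ctwo$ and $U(Vw) = w$. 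Combining the two inclusions gives $U\ctwo = \ctwo$.

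The main obstacle I anticipate is purely bookkeeping rather than conceptual: one must be careful that the regularity count actually closes, i.e. that $J$ and $1/\sqrt{J}$ lose exactly one derivative relative to $f$ (because $J$ involves first derivatives of $f$) and no more, so that starting from a $C^3$-diffeomorphism one lands in $C^2$ and not merely $C^1$. This is why Definition \ref{defn:ptqm} demands $C^3$ rather than $C^2$, and it is exactly the margin needed for part (a); for part (b) the $C^2$ target is consistent with $\ctwo$ input and with $V$ mapping back into $\ctwo$. The nonvanishing of $J$ from \eqref{eq:det}, uniform on compact sets, is what licenses taking $1/\sqrt{J}$ without losing smoothness, and should be stated explicitly at the point where $1/\sqrt{J(x(X))}$ is differentiated.
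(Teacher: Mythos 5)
Your proposal is correct and takes essentially the same approach as the paper: the paper's proof of Lemma \ref{lem:property} is just the one-line assertion that it ``follows immediately from Definition \ref{defn:ptqm} and \eqref{eq:tilde}'', and your change-of-variables argument for unitarity (with the explicit inverse $V$) together with the chain-rule regularity bookkeeping --- $J$ costs one derivative, which is exactly why Definition \ref{defn:ptqm} requires a $C^3$-diffeomorphism and why $U\cthree$ lands only in $\ctwo$ --- is precisely the verification the paper leaves implicit.
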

\begin{proof}
The lemma follows immediately from Definition \ref{defn:ptqm} and \eqref{eq:tilde}.
\end{proof}

We denote by $\displaystyle{-i\frac{\partial}{\partial \Xa}}$ the operator
\[
-i\frac{\partial}{\partial \Xa}: \ut \longmapsto -i\frac{\partial \ut}{\partial \Xa}
\]
with domain
\begin{eqnarray*}
& &D(-i\frac{\partial}{\partial \Xa}) \\
&=& \left\{ \ut (X): \ut \;\mbox{is absolutely continuous with respect to} \;\Xa , \; \ut, \, \frac{\partial \ut}{\partial \Xa} \in \eltwo \right\}.
\end{eqnarray*}
Note that the operator $\displaystyle{-i\frac{\partial}{\partial \Xa}}$ acts on functions of $\Xa$'s. On the other hand, the operator $\Pa$ acts on functions of $\xa$'s (see Remark \ref{rem:concrete}).

Let $\Pddota$ denote the operator $\Pa$ restricted $\ctwo$ and let $\overline{\Pddota}$ denote the closure of $\Pddota$. A straightforward calculation gives the following.
\begin{lem}\label{lem:onctwo}
$\displaystyle{\Pddota u = U^{\ast}\left( -i\frac{\partial}{\partial \Xa} \right)U u}$, \quad $u \in \ctwo$.
\end{lem}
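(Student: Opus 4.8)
The plan is to verify the claimed identity $\Pddota u = U^{\ast}\bigl(-i\,\partial/\partial\Xa\bigr)U u$ for $u \in \ctwo$ by a direct computation, working out both sides as explicit differential operators acting on functions of the $\xa$'s. First I would fix $u \in \ctwo$ and recall from Remark \ref{rem:concrete} the concrete form
\[
\Pa u = -i\sum_{\beta=1}^n \frac{\partial\xb}{\partial\Xa}\frac{\partial u}{\partial\xb} - \frac{i}{2}\sum_{\beta=1}^n\frac{\partial}{\partial\xb}\!\left(\frac{\partial\xb}{\partial\Xa}\right)u,
\]
so that $\Pddota u$ is just this expression (all coefficients are $C^2$ by Definition \ref{defn:ptqm}, so $\Pddota u \in C_0(\mathbb{R}^n)$ and the restriction is legitimate). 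The right-hand side I would unwind from the outside in: by \eqref{eq:tilde}, $Uu = \ut$ with $\ut(X) = u(x(X))/\sqrt{J(x(X))}$; applying $-i\,\partial/\partial\Xa$ gives a function of $X$; and $U^{\ast}$ sends a function $\vt$ of the $X$'s back to $\sqrt{J(x)}\,\vt(X(x))$ (this is the inverse change of variables, which multiplies by $\sqrt{J}$ rather than dividing, since $U$ is unitary with the measure-compensating factor).

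The key computational step is differentiating $\ut = u(x(X))/\sqrt{J(x(X))}$ with respect to $\Xa$. By the chain rule,
\[
\frac{\partial\ut}{\partial\Xa} = \frac{1}{\sqrt{J}}\sum_{\beta=1}^n\frac{\partial\xb}{\partial\Xa}\frac{\partial u}{\partial\xb} + u\cdot\frac{\partial}{\partial\Xa}\!\left(J^{-1/2}\right),
\]
and for the second term I would use $\partial_{\Xa}(J^{-1/2}) = -\tfrac12 J^{-3/2}\,\partial_{\Xa}J = -\tfrac12 J^{-1/2}\,(\partial_{\Xa}J)J^{-1}$, at which point Lemma \ref{lem:cal} converts $(\partial_{\Xa}J)J^{-1}$ into $-\sum_{\beta}\partial_{\xb}(\partial\xb/\partial\Xa)$. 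Multiplying through by $-i$ and then by $\sqrt{J}$ (the action of $U^{\ast}$, after re-expressing everything in the $x$ variables) the $\sqrt{J}$ factors cancel exactly, leaving
\[
-i\sum_{\beta=1}^n\frac{\partial\xb}{\partial\Xa}\frac{\partial u}{\partial\xb} - \frac{i}{2}\sum_{\beta=1}^n\frac{\partial}{\partial\xb}\!\left(\frac{\partial\xb}{\partial\Xa}\right)u,
\]
which is precisely $\Pa u$, hence $\Pddota u$. Finally I would check the domain bookkeeping: $u \in \ctwo$ gives $\ut \in U\ctwo = \ctwo$ by Lemma \ref{lem:property}(b), so $\ut$ lies in the domain of $-i\,\partial/\partial\Xa$ and all manipulations are classical.

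The main obstacle is purely the bookkeeping of the two changes of variables together with the Jacobian weight: one must be careful that $U$ divides by $\sqrt{J}$ while $U^{\ast}$ multiplies by $\sqrt{J}$, that all partial derivatives $\partial/\partial\Xa$ are taken holding the other $X_\gamma$ fixed (not the $x_\gamma$), and that the half-power of $J$ produces exactly the factor that Lemma \ref{lem:cal} is tailored to absorb. No analytic difficulty arises since $u$ and all transition coefficients are smooth enough for everything to be a genuine pointwise identity; the lemma is then immediate once the cancellation is displayed.
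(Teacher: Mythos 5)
Your computation is correct and is exactly the ``straightforward calculation'' the paper invokes without writing out: expand $Uu = u/\sqrt{J}$, differentiate in $\Xa$ via the chain rule, absorb $(\partial_{\Xa}J)J^{-1}$ with Lemma \ref{lem:cal}, and let $U^{\ast}$ restore the $\sqrt{J}$ so that the result matches the form of $\Pa$ in Remark \ref{rem:concrete}. The domain bookkeeping via Lemma \ref{lem:property}(b) is also handled as the paper intends, so nothing is missing.
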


\begin{lem}\label{lem:pandd}
$\displaystyle{\overline{\Pddota} = U^{\ast}\left( -i\frac{\partial}{\partial \Xa} \right)U}$.
\end{lem}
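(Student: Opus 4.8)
The plan is to derive the identity from Lemma~\ref{lem:onctwo} by exploiting that $U$ is unitary and that $\ctwo$ is a core for $-i\,\partial/\partial\Xa$. To begin with, the operator $-i\,\partial/\partial\Xa$ with the domain specified above is selfadjoint in $\eltwo$: in the single variable $\Xa$ it is the usual one-dimensional momentum operator on the absolutely continuous $L^2$-functions with $L^2$-derivative, and the $n$-dimensional operator is selfadjoint by the standard argument (for instance via the partial Fourier transform in $\Xa$), hence in particular closed. By Lemma~\ref{lem:property} the operator $U$ is unitary, so $U^{\ast}(-i\,\partial/\partial\Xa)U$ is closed as well. Moreover, for $u\in\ctwo$ we have $Uu\in\ctwo\subset D(-i\,\partial/\partial\Xa)$ by Lemma~\ref{lem:property}(b), so $\ctwo\subset D\bigl(U^{\ast}(-i\,\partial/\partial\Xa)U\bigr)$, and Lemma~\ref{lem:onctwo} says the two operators coincide on $\ctwo$; that is, $\Pddota\subset U^{\ast}(-i\,\partial/\partial\Xa)U$. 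Taking closures yields $\overline{\Pddota}\subset U^{\ast}(-i\,\partial/\partial\Xa)U$.

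For the reverse inclusion it is enough to know that $\ctwo$ is a core for $-i\,\partial/\partial\Xa$. Assuming this, take $u$ with $Uu\in D(-i\,\partial/\partial\Xa)$ and choose $\vt_k\in\ctwo$ with $\vt_k\to Uu$ and $-i\,\partial\vt_k/\partial\Xa\to -i\,\partial(Uu)/\partial\Xa$ in $\eltwo$. Set $\phi_k=U^{\ast}\vt_k$; applying $U^{\ast}$ to Lemma~\ref{lem:property}(b) gives $U^{\ast}\ctwo=\ctwo$, so $\phi_k\in\ctwo$, and $\phi_k\to u$ in $\eltwo$. By Lemma~\ref{lem:onctwo}, $\Pddota\phi_k=U^{\ast}(-i\,\partial/\partial\Xa)U\phi_k=U^{\ast}\bigl(-i\,\partial\vt_k/\partial\Xa\bigr)\to U^{\ast}(-i\,\partial/\partial\Xa)Uu$ in $\eltwo$. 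Hence $u\in D(\overline{\Pddota})$ with $\overline{\Pddota}u=U^{\ast}(-i\,\partial/\partial\Xa)Uu$, so $U^{\ast}(-i\,\partial/\partial\Xa)U\subset\overline{\Pddota}$, and combined with the previous paragraph this proves the lemma.

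The main obstacle is thus the core property of $\ctwo$, which I would handle by the standard cut-off and mollification argument. Given $\vt\in D(-i\,\partial/\partial\Xa)$, pick cut-offs $\chi_R\in C_0^{\infty}(\mathbb{R}^n)$ with $\chi_R\equiv1$ on $\{|X|\le R\}$, $0\le\chi_R\le1$, and $\|\partial\chi_R/\partial\Xa\|_{\infty}\to0$ as $R\to\infty$; then $\chi_R\vt\to\vt$ in $\eltwo$ and $\partial(\chi_R\vt)/\partial\Xa=\chi_R\,(\partial\vt/\partial\Xa)+(\partial\chi_R/\partial\Xa)\,\vt\to\partial\vt/\partial\Xa$ in $\eltwo$, the last term tending to $0$ because $\vt\in\eltwo$ and $\|\partial\chi_R/\partial\Xa\|_{\infty}\to0$. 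Mollifying the compactly supported function $\chi_R\vt$ gives elements of $C_0^{\infty}(\mathbb{R}^n)\subset\ctwo$ converging to $\chi_R\vt$ in the graph norm of $-i\,\partial/\partial\Xa$, since mollification commutes with $\partial/\partial\Xa$ and converges in $\eltwo$. A diagonal choice over $R$ and the mollification parameter then produces the desired approximating sequence in $\ctwo$, which establishes the core property and finishes the proof.
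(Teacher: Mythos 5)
Your proof is correct and follows essentially the same route as the paper: one inclusion by applying Lemma \ref{lem:onctwo} on $\ctwo$ and taking closures against the closed operator $U^{\ast}\left(-i\frac{\partial}{\partial \Xa}\right)U$, the other by approximating $Uu$ in graph norm from $\ctwo$ (the core property, equivalently the essential selfadjointness of $-i\,\partial/\partial\Xa$ on $\ctwo$ that the paper invokes) and pulling back with $U^{\ast}\ctwo=\ctwo$. The only difference is that you supply the standard cut-off--mollification proof of the core property, which the paper simply recalls as known.
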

\begin{proof} \quad {\it Step 1} \quad  We show
\[
\overline{\Pddota} \subset U^{\ast}\left( -i\frac{\partial}{\partial \Xa} \right)U.
\]
For $u \in D(\overline{\Pddota})$, there is a sequence $\{ u_m \}_m \subset \ctwo$ satisfying
\[
u_m \to u, \qquad \Pddota u_m \to \overline{\Pddota} u \quad \mbox{in}\; \eltwo.\]
Combining Lemma \ref{lem:property} with Lemma \ref{lem:onctwo} yields
\[
U u_m \to U u, \qquad \left( -i\frac{\partial}{\partial \Xa} \right)U u_m \to U \overline{\Pddota} u \quad \mbox{in}\; \eltwo.
\]
Lemma \ref{lem:property} (b) implies that $U u_m \in \ctwo \subset D(-i\,\partial / \partial \Xa)$. Since $-i\,\partial / \partial \Xa$ is a closed operator,
\[
U u \in D(-i\frac{\partial}{\partial \Xa}), \qquad U^{\ast}\left( -i\frac{\partial}{\partial \Xa} \right)U u = \overline{\Pddota} u.
\]

{\it Step 2} \quad We next show
\[
\overline{\Pddota} \supset U^{\ast}\left( -i\frac{\partial}{\partial \Xa} \right)U.
\]
For $v \in U^{\ast}D(-i\,\partial / \partial \Xa)$, we set $\vt = U v$. Then $\vt \in D(-i\,\partial / \partial \Xa)$. Let us recall here that
\[
\mbox{the operator} \; -i\frac{\partial}{\partial \Xa} \; \mbox{with domain} \; \ctwo \; \mbox{is essentially selfadjoint.}
\]
Hence there is a sequence $\{ {\vt}_m \}_m \subset \ctwo$ satisfying
\[
{\vt}_m \to \vt, \qquad -i\frac{\partial}{\partial \Xa} {\vt}_m \to -i\frac{\partial}{\partial \Xa} \vt \quad \mbox{in}\; \eltwo.
\]
Set $v_m = U^{\ast}{\vt}_m$. Then $v_m \in \ctwo$ by Lemma \ref{lem:property} (b). Combining Lemma \ref{lem:property} with Lemma \ref{lem:onctwo} again yields
\[
U^{\ast}{\vt}_m \to v, \qquad \Pddota U^{\ast} {\vt}_m \to U^{\ast}\left( -i\frac{\partial}{\partial \Xa} \right)U v \quad \mbox{in}\; \eltwo.
\]
Therefore,
\[
v \in D(\overline{\Pddota}), \qquad \overline{\Pddota} v = U^{\ast}\left( -i\frac{\partial}{\partial \Xa} \right)U v.
\]
The lemma follows.
\end{proof}

Theorem \ref{thm:sasp} (c) immediately implies the following.
\begin{crl}\label{crl:psa}
The operator $\Pddota$ is essentially selfadjoint.
\end{crl}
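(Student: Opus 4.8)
The plan is to deduce the corollary directly from Theorem~\ref{thm:sasp}(c) by exploiting that a selfadjoint operator admits no proper symmetric extension. First I would record that, by definition, both $\Pdota$ and $\Pddota$ are restrictions of the one differential expression $\Pa$ of Remark~\ref{rem:concrete}, to $\cthree$ and to $\ctwo$ respectively; since $\cthree \subset \ctwo$ this gives the operator inclusion $\Pdota \subset \Pddota$, and hence $\overline{\Pdota} \subset \overline{\Pddota}$.

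Next I would check that $\Pddota$ is a symmetric operator in $\eltwo$. Because $f$ is a $C^3$-diffeomorphism, $x(X)$ is $C^3$ in $X$, so each coefficient $\partial\xb/\partial\Xa$ occurring in $\Pa$ is a $C^2$ function of the $\xa$'s and $\sum_\beta (\partial/\partial\xb)(\partial\xb/\partial\Xa)$ is $C^1$; hence $\Pa$ maps $\ctwo$ into $C_0^1(\mathbb{R}^n)\subset\eltwo$, and the same integration-by-parts computation that makes $\Pdota$ symmetric — the boundary terms vanishing by compact support — yields $(\Pddota u,\,v) = (u,\,\Pddota v)$ for $u,\,v \in \ctwo$. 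Thus $\overline{\Pddota}$ is a closed symmetric operator extending the selfadjoint operator $\overline{\Pdota}$ (selfadjoint by Theorem~\ref{thm:sasp}(c)). Since a selfadjoint operator is maximal among symmetric operators, the inclusion $\overline{\Pdota} \subset \overline{\Pddota}$ forces $\overline{\Pddota} = \overline{\Pdota}$; in particular $\overline{\Pddota}$ is selfadjoint, which is exactly the statement that $\Pddota$ is essentially selfadjoint.

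Equivalently, one could invoke Lemma~\ref{lem:pandd}, which gives $\overline{\Pddota} = U^{\ast}\bigl(-i\,\partial/\partial\Xa\bigr)U$: since $U$ is unitary by Lemma~\ref{lem:property} and $-i\,\partial/\partial\Xa$ on the maximal domain displayed above is the standard selfadjoint momentum operator in the variable $\Xa$, the right-hand side is selfadjoint, giving the same conclusion. I do not expect a genuine obstacle here; the corollary is a soft consequence of the work already done, and the only point that merits a sentence of justification is the symmetry of $\Pddota$ on $\ctwo$, which is settled exactly as it was for $\Pdota$.
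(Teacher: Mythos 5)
Your proposal is correct and follows the paper's own reasoning: the paper states that Theorem~\ref{thm:sasp}(c) ``immediately implies'' the corollary (the maximality argument you spell out, using $\Pdota \subset \Pddota$ and symmetry of $\Pddota$), and it also notes the alternative route via Lemmas~\ref{lem:property} and~\ref{lem:pandd}, which is exactly your second argument. You have simply supplied the details the paper leaves implicit.
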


Combining Lemma \ref{lem:property} with Lemma \ref{lem:pandd} also implies Corollary \ref{crl:psa}. Moreover, it is easy to see that the following holds.
\begin{lem}\label{lem:relationp}
$\displaystyle{\overline{\Pdota} = \overline{\Pddota} = U^{\ast}\left( -i\frac{\partial}{\partial \Xa} \right)U}$.
\end{lem}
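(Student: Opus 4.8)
The plan is to prove Lemma~\ref{lem:relationp} by chaining together the two identifications already established: the $C^3$-based closure $\overline{\Pdota}$ coincides with the operator $U^{\ast}(-i\,\partial/\partial\Xa)U$, which by Lemma~\ref{lem:pandd} equals $\overline{\Pddota}$. Thus it suffices to show $\overline{\Pdota} = \overline{\Pddota}$, and the natural route is to prove that $\Pdota$ and $\Pddota$ have the same closure because $\cthree$ is a core for $\overline{\Pddota}$ (equivalently, $\Pddota$ and its restriction to $\cthree$ have the same closure). Since $\cthree \subset \ctwo$, we automatically have $\overline{\Pdota} \subset \overline{\Pddota}$; the content is the reverse inclusion.

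First I would transport everything through the unitary $U$ using Lemmas~\ref{lem:property} and~\ref{lem:onctwo}. By Lemma~\ref{lem:property}(a), $U\cthree \subset \ctwo$, and more precisely one checks from \eqref{eq:tilde} and Definition~\ref{defn:ptqm} (a $C^3$-diffeomorphism pulls a $C^2$ function back to a $C^2$ function and pushes a $C_0^3$ function forward to a $C_0^2$ function, with the reciprocal square-root Jacobian being $C^2$) that in fact $U$ maps $\cthree$ \emph{onto} $C_0^2$-type functions that form a core for $-i\,\partial/\partial\Xa$. The key external fact I would invoke is the standard one, already used in the proof of Lemma~\ref{lem:pandd}, that $C_0^{\infty}$ — and hence any intermediate space such as the image $U\cthree$ once it is shown to contain $C_0^{\infty}(\mathbb{R}^n)$ — is a core for the essentially selfadjoint operator $-i\,\partial/\partial\Xa$. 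Concretely, $U^{\ast}C_0^{\infty}(\mathbb{R}^n) \subset \cthree$ because $f$ is a $C^3$-diffeomorphism and $\sqrt{J}\,$ is then $C^2$ with $C^3$ would not even be needed here; and $C_0^{\infty}(\mathbb{R}^n)$ is dense in the graph norm of $-i\,\partial/\partial\Xa$.

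The argument then runs exactly parallel to Step~2 of the proof of Lemma~\ref{lem:pandd}: given $v \in D\big(U^{\ast}(-i\,\partial/\partial\Xa)U\big)$, set $\vt = Uv \in D(-i\,\partial/\partial\Xa)$; choose $\vt_m \in C_0^{\infty}(\mathbb{R}^n)$ (a core) with $\vt_m \to \vt$ and $-i\,\partial_{\Xa}\vt_m \to -i\,\partial_{\Xa}\vt$ in $\eltwo$; put $v_m = U^{\ast}\vt_m \in \cthree$; and use Lemma~\ref{lem:onctwo} (which applies on $\ctwo \supset \cthree$, so in particular on $\cthree$) to conclude $\Pdota v_m = U^{\ast}(-i\,\partial_{\Xa})Uv_m \to U^{\ast}(-i\,\partial_{\Xa})Uv$ while $v_m \to v$. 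Hence $v \in D(\overline{\Pdota})$ with $\overline{\Pdota}v = U^{\ast}(-i\,\partial_{\Xa})Uv$, giving $U^{\ast}(-i\,\partial/\partial\Xa)U \subset \overline{\Pdota}$. Combined with $\overline{\Pdota} \subset \overline{\Pddota} = U^{\ast}(-i\,\partial/\partial\Xa)U$ from Lemma~\ref{lem:pandd} and the obvious $\overline{\Pdota}\subset\overline{\Pddota}$, all three operators coincide.

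The main obstacle is the bookkeeping of regularity classes under the change of variables: one must be careful that $U$ and $U^{\ast}$ really do map the relevant test-function spaces into each other with the right smoothness — in particular that $U^{\ast}C_0^{\infty}(\mathbb{R}^n)\subset\cthree$ uses that $f^{-1}$ is $C^3$ and that the Jacobian factor $1/\sqrt{J}$ is $C^2$, so that the composite lands in $C_0^3$ (here the $C^3$ hypothesis in Definition~\ref{defn:ptqm} is exactly what is needed). Once that is pinned down the rest is the routine core/closure argument already rehearsed twice in the paper, so I would keep the write-up short and simply say the proof mirrors that of Lemma~\ref{lem:pandd} with $C_0^{\infty}(\mathbb{R}^n)$ playing the role of the core.
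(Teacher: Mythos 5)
Your overall reduction is fine (the inclusion $\overline{\Pdota}\subset\overline{\Pddota}=U^{\ast}(-i\,\partial/\partial\Xa)U$ is immediate from $\cthree\subset\ctwo$ and Lemma \ref{lem:pandd}), but the step you rely on for the reverse inclusion contains a genuine gap: the claim that $U^{\ast}C_0^{\infty}(\mathbb{R}^n)\subset\cthree$. By \eqref{eq:tilde}, $U^{\ast}$ acts as $\tilde v\mapsto \sqrt{J(x)}\,\tilde v(f(x))$. Since $f$ is only assumed to be a $C^3$-diffeomorphism, the Jacobian $J$ is built from first derivatives of $f$ and is therefore only $C^2$, so $\sqrt{J}\in C^2(\mathbb{R}^n)$ and the product $\sqrt{J}\,(\tilde v\circ f)$ lands in $\ctwo$, not in $\cthree$, even for $\tilde v\in C_0^{\infty}(\mathbb{R}^n)$. (This is exactly why the paper can only assert $U\cthree\subset\ctwo$ and $U\ctwo=\ctwo$ in Lemma \ref{lem:property}, with no statement about preservation of third-order smoothness.) Consequently your approximants $v_m=U^{\ast}\tilde v_m$ need not lie in $D(\Pdota)=\cthree$, the expression $\Pdota v_m$ is not defined, and the Step-2-style graph-limit argument does not go through; the very regularity bookkeeping you flag as the ``main obstacle'' is where the proof breaks.

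The intended argument is much shorter and uses what has already been proved rather than a new core argument. By Theorem \ref{thm:sasp} (c), established independently in Section 3 via the deficiency-index computation, $\overline{\Pdota}$ is selfadjoint. By Lemma \ref{lem:pandd}, $\overline{\Pddota}=U^{\ast}(-i\,\partial/\partial\Xa)U$ is also selfadjoint (it is unitarily equivalent to the maximal $-i\,\partial/\partial\Xa$). Since $\Pdota\subset\Pddota$, we have $\overline{\Pdota}\subset\overline{\Pddota}$, i.e.\ the selfadjoint operator $\overline{\Pddota}$ is a symmetric extension of the selfadjoint operator $\overline{\Pdota}$; selfadjoint operators are maximally symmetric, so the two coincide, giving $\overline{\Pdota}=\overline{\Pddota}=U^{\ast}(-i\,\partial/\partial\Xa)U$. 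If you want to avoid invoking Theorem \ref{thm:sasp} (c), you would instead have to show directly that $\cthree$ is a core for $\overline{\Pddota}$, e.g.\ by mollifying functions in $\ctwo$ in the graph norm of $\Pa$ --- an extra argument your write-up does not supply.
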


We denote by $\mxa$ the multiplication by$\Xa$:
\[
\mxa : \ut \longmapsto \Xa \ut
\]
with domain
\[
D(\mxa) = \left\{ \ut (X): \ut, \, \Xa \ut \in \eltwo \right\}.
\]
Note that the operator $\mxa$ acts on functions of $\Xa$'s. On the other hand, the operator $\Xa$ acts on functions of $\xa$'s (see Remark \ref{rem:concrete}). A straightforward calculation gives the following.
\begin{lem}\label{lem:relationx}
The operators $\Xa$ and $\mxa$ are unitarily equivalent, i.e.,
\[
\Xa = U^{\ast} \mxa U.
\]
\end{lem}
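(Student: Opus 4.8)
The plan is to prove Lemma \ref{lem:relationx} by a direct change-of-variables computation, exploiting the fact that the defining relation $X = f(x)$ reads componentwise $\Xa = \fa(x)$, so that multiplication by $\fa(x)$ in the $x$-picture is literally multiplication by $\Xa$ in the $X$-picture. Concretely, I would show that $u \in D(\Xa)$ if and only if $Uu \in D(\mxa)$, and that on this common domain $U\Xa u = \mxa Uu$; since $U$ is unitary (Lemma \ref{lem:property}), this is equivalent to the asserted equality $\Xa = U^{\ast}\mxa U$.

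For the forward direction, take $u \in D(\Xa)$, i.e. $u, \fa(x)u \in \eltwo$. Writing $\ut = Uu$, so that $\ut(X) = u(x(X))/\sqrt{J(x(X))}$, and using $\Xa = \fa(x(X))$, one gets
\[
\Xa\,\ut(X) = \fa(x(X))\,\frac{u(x(X))}{\sqrt{J(x(X))}} = \bigl(U(\fa(\cdot)u)\bigr)(X).
\]
Since $U$ preserves the $\eltwo$-norm (this is the computation preceding Lemma \ref{lem:property}), $\|\Xa\ut\|_{\eltwo} = \|\fa(\cdot)u\|_{\eltwo} < \infty$, hence $\ut \in D(\mxa)$ and $\mxa Uu = U\Xa u$. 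For the converse, given $\ut \in D(\mxa)$ put $u = U^{\ast}\ut$, i.e. $u(x) = \sqrt{J(x)}\,\ut(X(x))$; the same identity read backwards gives $\fa(x)u(x) = \sqrt{J(x)}\,(\Xa\ut)(X(x)) = (U^{\ast}\mxa\ut)(x)$, so $\fa(\cdot)u \in \eltwo$, i.e. $u \in D(\Xa)$, and $\Xa u = U^{\ast}\mxa Uu$. Combining the two directions gives $D(\Xa) = U^{\ast}D(\mxa)$ together with the pointwise agreement of the operators.

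I expect no real obstacle here: everything is routine once one observes the single key point that $\Xa = \fa(x)$ identically, so the two multiplication operators are intertwined by the unitary change of variables $U$ essentially by construction. The only things to keep track of are the Jacobian weight $\sqrt{J}$ occurring in $U$, which is harmless because multiplication operators commute with it, and the matching of the domains, which follows at once from unitarity of $U$ and the norm identity of Lemma \ref{lem:property}.
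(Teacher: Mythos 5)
Your proposal is correct and is exactly the ``straightforward calculation'' the paper invokes (it gives no further proof of Lemma \ref{lem:relationx}): the identity $\fa(x(X))=\Xa$ shows $U$ intertwines the two multiplication operators, and the isometry of $U$ handles the domain matching. Nothing differs in substance from the paper's intended argument; you have simply written out the details.
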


Each of Theorem \ref{thm:sasp} (a), (b), (d) and Theorem \ref{thm:ccr} immediately follows from Lemmas \ref{lem:relationp} and \ref{lem:relationx}.


\section{An example of a point transformation in quantum mechanics}

Let $x \in \mathbb{R}$. The coordinate transformation
\[
x \longmapsto X = \sinh x
\]
is a point transformation in quantum mechanics. The corresponding extended point transformation is
\[
\left( x, \, -i\,\frac{\partial}{\partial x} \right) \longmapsto \left( X = \sinh x, \, P = -i\,\frac{1}{\,\cosh x \,}\left( \frac{\partial}{\, \partial x \,} - \frac{1}{\, 2\,}\tanh x \right) \right).
\]
Therefore, the operators $X$ and $P$ acting on functions of $x$ are both canonical variables in quantum mechanics.


\thebibliography{4}
\bibitem{dewitt}DeWitt, B.S., \textit{Point transformations in quantum mechanics}, Phys. Rev. \textbf{85} (1952), 653--661.

\bibitem{goldstein}Goldstein, J.A., \textit{Semigroups of Linear Operators and Applications}, Oxford University Press, New York, 1985/Clarendon Press, Oxford, 1985.

\bibitem{reedsimon}Reed, M. and Simon, B., \textit{Methods of Modern Mathematical Physics II}, \textit{Fourier Analysis, Self-Adjointness}, Academic Press, New York, 1975.

\bibitem{whittaker}Whittaker, E.T., \textit{A Treatise on the Analytical Dynamics of Particles and Rigid Bodies}, 4th ed., Cambridge University Press, Cambridge, New York, Melbourne, 1937.
\endthebibliography

\vspace{0.5cm}

Y. OHNUKI \\
\noindent (former address) Nagoya Women's University \\
\noindent 1302 Takamiya, Tempaku, Nagoya 468-8507, Japan \\
\noindent e-mail address: ohnuki@nagoya-wu.ac.jp \\
\vspace{0.2cm} \\
S. WATANABE \\
\noindent (former address) Department of Electronics and Information Engineering \\
\noindent Aichi University of Technology \\
\noindent 50-2 Manori, Nishihazama-cho, Gamagouri 443-0047, Japan \\
\noindent e-mail address: watanabe@aut.ac.jp

\end{document}